\newtheorem{proposition}{Proposition}[section]
\newtheorem{theorem}[proposition]{Theorem}
\DeclarePairedDelimiter\norm{\lVert}{\rVert}
\begin{document}

\title{Euclidean Matchings in Ultra-Dense Networks}

\author{Alexander Kartun-Giles,~\IEEEmembership{Member}, Suhanya Jayaprakasam,~\IEEEmembership{Member} \\
       Sunwoo Kim,~\IEEEmembership{Member}
\thanks{The Authors are with the Wireless Systems Laboratory in the Department of Electronic Engineering, Hanyang University, Seoul, South Korea.}}

\markboth{IEEE Communications Letters,~Vol.~xx, No.~xx, ~2018}%
{}

\maketitle

\begin{abstract}
In order to study the fundamental limits of network densification, we look at the spatial spectral efficiency gain achieved when densely deployed communication devices embedded in the $d$-dimensional Euclidean space are optimally `matched' in near-neighbour pairs. In light of recent success in probabilistic modelling, we study devices distributed uniformly at random in the unit cube which enter into one-on-one contracts with each another. This is known in statistical physics as an Euclidean `matching'. Communication channels each have their own maximal data capacity given by Shannon's theorem. The length of the shortest matching then corresponds to the maximum one-hop capacity on those points. Interference is then added as a further constraint, which is modelled using shapes as guard regions, such as a disk, diametral disk, or equilateral triangle, matched to points, in a similar light to computational geometry. The disk, for example, produces the Delaunay triangulation, while the diametral disk produces a beta-skeleton. We also discuss deriving the scaling limit of both models using the replica method from the physics of disordered systems.
\end{abstract}

\IEEEpeerreviewmaketitle

\section{Introduction}\label{sec:introduction}
With a random configuration of devices in space, location-aware, directional transmission and reception entails device pairing, known in the literature as a \textit{Euclidean matching}. This consists of a binomial point process of $N$ nodes, drawn from a bounded domain, with edges added in such a way that no two edges intersect, see Fig. \ref{fig:main1}. Due to the continuity of the space, there is always a `shortest' matching, where by short we mean that the Euclidean lengths of the at most $N/2$ edges concatenate to form a rectifiable curve which minimises a length functional.

In the case of ultra-dense cellular networks, an obvious focus is on combating inter-transceiver interference by minimising a cost functional defined on an ensemble of random geometric graphs \cite{alammouri2017, alammouri20172, ge2016}. In this letter, we focus on the anticipated interference-combatting technique of directional transmission, which is becoming essential in close proximity, short-range wireless communications. The main topic of this letter is that, in our, or any communication-theoretic analogue of a Euclidean matching problem, interference appears in addition to the matching constraint. At least in the interference-free case, studying the \textit{scaling limit} of the sum of edge lengths in the shortest matching is related to the \textit{monopartite Euclidean matching problem} \cite{sicuro2017}. The \textit{bipartite} case is alternatively a one-to-one correspondence between two different types of nodes. However, in the interference-limited problem, one takes the length of the shortest matching, provides a corresponding interference-free capacity, and adjusts it according to an interference functional, which implies a dramatic augmentation of the microscopic pairings, but only, as we show, a constant factor scaling of the capacity.

For a comprehensive literature review, see both the monograph of Plummer and Lov\'{a}sz \cite{lovaszbook}, and the recent monograph of Han et al. \cite{han2017}, where the current application of `matching theory' in resource allocation problems is discussed. Topics include the \textit{stable marriage problem} in the D2D scenario \cite{han20172}, and the \textit{stable fixture model} for LTE V2X \cite{han20173}. A review of the effects of network densification can be found in Gupta, Zhang and Andrews, for example, as well as AlAmmouri, Andrews and Baccelli \cite{gupta2015, alammouri2017}. See also recent developments in location-awareness \cite{taranto2014}, directional antenna capacity gain \cite{li20112}, and in general, other, similar scenarios where spatially stochastic network models appear in wireless communications \cite{knight2017,giles2016,giles2015,koufos2016,koufos2018}.

Summarising the contents of this letter, in Section \ref{sec:model} we introduce our model, in Section \ref{sec:interference} we discuss the constraint introduced by interference, and then detail our results about the relation between the order of the data capacity in both the interference-free and interference-limited case. In Section \ref{sec:multihop} we then detail what is meant by multihop transport under this independent edge set constraint. In Section \ref{sec:otherresults}, we discuss our results in relation to previous limits on data capacity from the perspective of stochastic geometry. In Section \ref{sec:asymptotics}, we discuss rescaling in the dense limit, and the effects of a different path loss model (stretched exponential path loss), directly applicable to the ultra-dense scenario. Finally, in Section \ref{sec:conclusion} we conclude.

\section{Matching Problems on Point Processes}\label{sec:model}
For $N \in \mathbb{N}$, consider the \textit{binomial point process} $\mathcal{X}_{2N} \subset [0,1]^{d}$ of $2N$ points. Form a \textit{perfect matching} $\mathcal{M}$ of these points by assigning $N$ of the pairs active in such a way that every point is incident to exactly one active pair. Call the Euclidean lengths of these edges $d_{1},d_{2},\dots,d_{N}$. We reserve $d$ for the dimension of the hypercube. For $C,\eta>0$, we then assign each edge its own data capacity $C_{i} = \log_{2}\left(1 + d_{i}^{-\eta}\right)$ based on the Shannon-Hartley theorem, since if the received signal power is well modelled by $P_{i} = Cd_{i}^{-\eta}$
taking $\eta$ for the path loss exponent, then such a function of the edge length corresponds to the theoretical upper bound on the data capacity of a link over distance $d_{i}$, given the bandwidth and noise are set to unity. For each matching, we therefore have a length $L_{\mathcal{M}} = \sum_{i}d_{i}$
and a capacity
\begin{eqnarray}\label{e:capacity}
  C_{\mathcal{M}} =: \sum_{i=1}^{N}\log_{2}(1+d_{i}^{-\eta}).
\end{eqnarray}
The perfect matching
which minimises $L_{\mathcal{M}}$ is the \textit{shortest} perfect matching, or just \textit{shortest matching}. As an example, Fig. \ref{fig:main1} depicts solutions to the monopartite Euclidean matching problem on $N = 150$ and $300$ points. We discuss this setting in Section \ref{sec:interference}.

%We exclusively consider \textit{monopartite} Euclidean matchings, where points are paired arbitrarily as long as no two edges intersect at a node. One can alternatively consider the more common \textit{bipartite} Euclidean matching, where devices are of two types, and pairing forms a one-to-one correspondence. While the monopartite case is applicable in e.g. the statistical mechanics of an ad hoc or D2D scenario \cite{cef2012,orestis2013,giles2016}, the bipartite case resembles a classic cell assignment problem, where the assignment is a bijection between cells and users. With a Poisson point process of cells and users, this corresponds to the \textit{Poisson-Poisson Euclidean bipartite matching problem} \cite[Chapter 4]{sicuro2017}.
%%%%%%%%%%%%%%%%%%%%%%%%%%%%%%%%
\section{Interference}\label{sec:interference}

\begin{figure}
  \begin{centering}
    \includegraphics[scale=0.32]{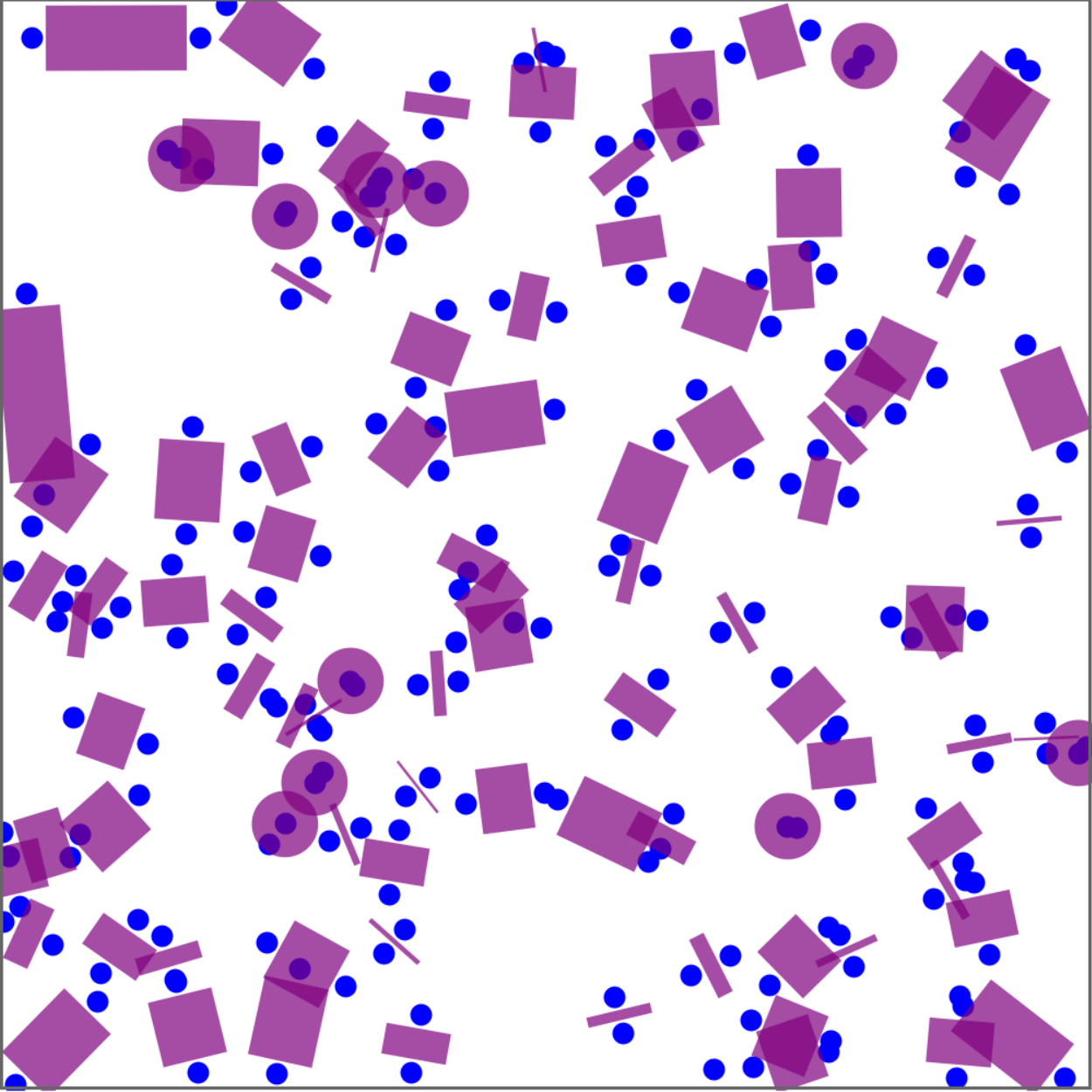} \hspace{1mm}
    \includegraphics[scale=0.32]{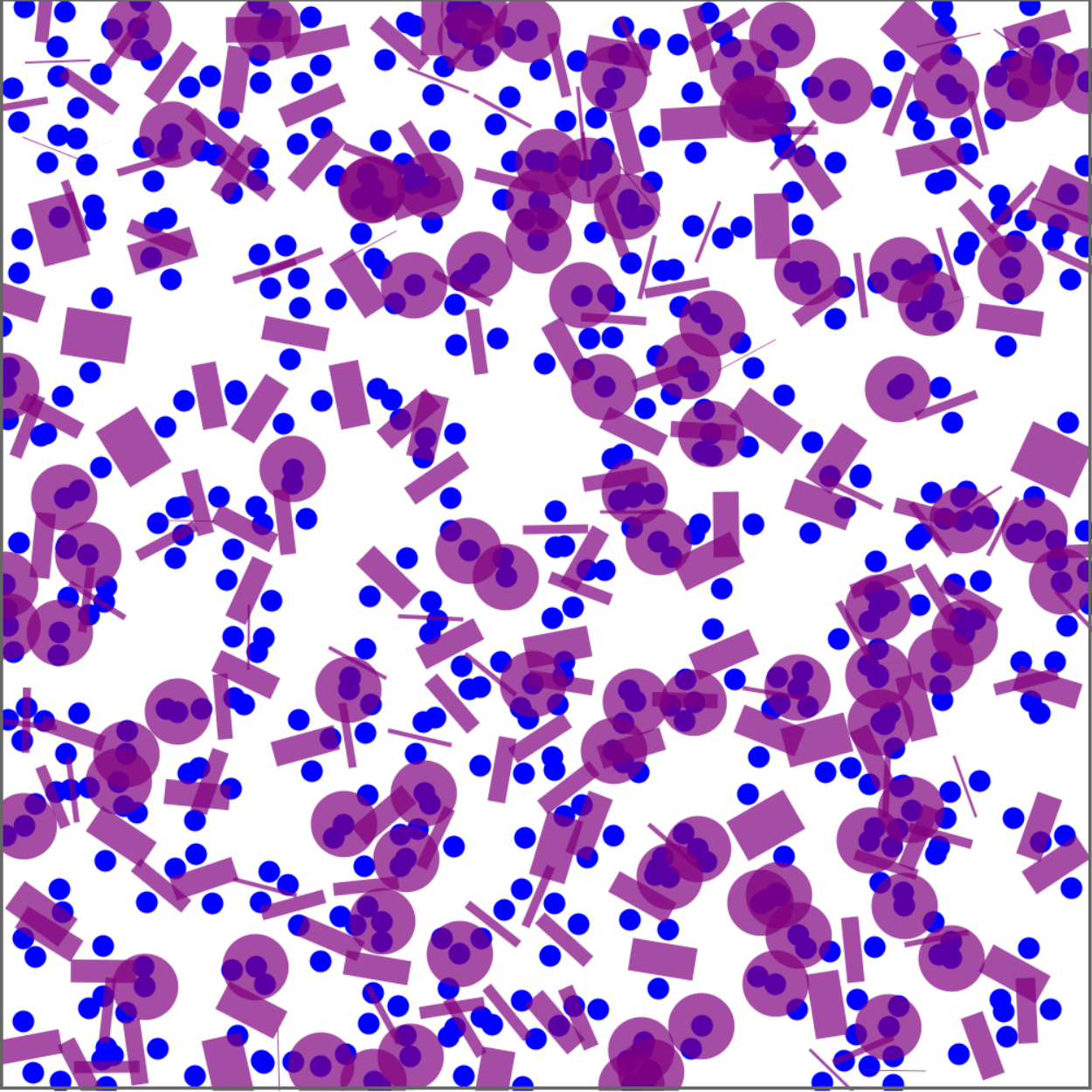} \\
    \caption{Perfect `matchings' on uniformly random points in $[0,1]^{2}$ with a density of $200$ (left) and $600$ (right) points per unit area. The purple edges represent the leading order contribuiton to an interference field created by a directional transmission.}\label{fig:main1}
    \vspace{3mm} 
    \includegraphics[scale=0.32]{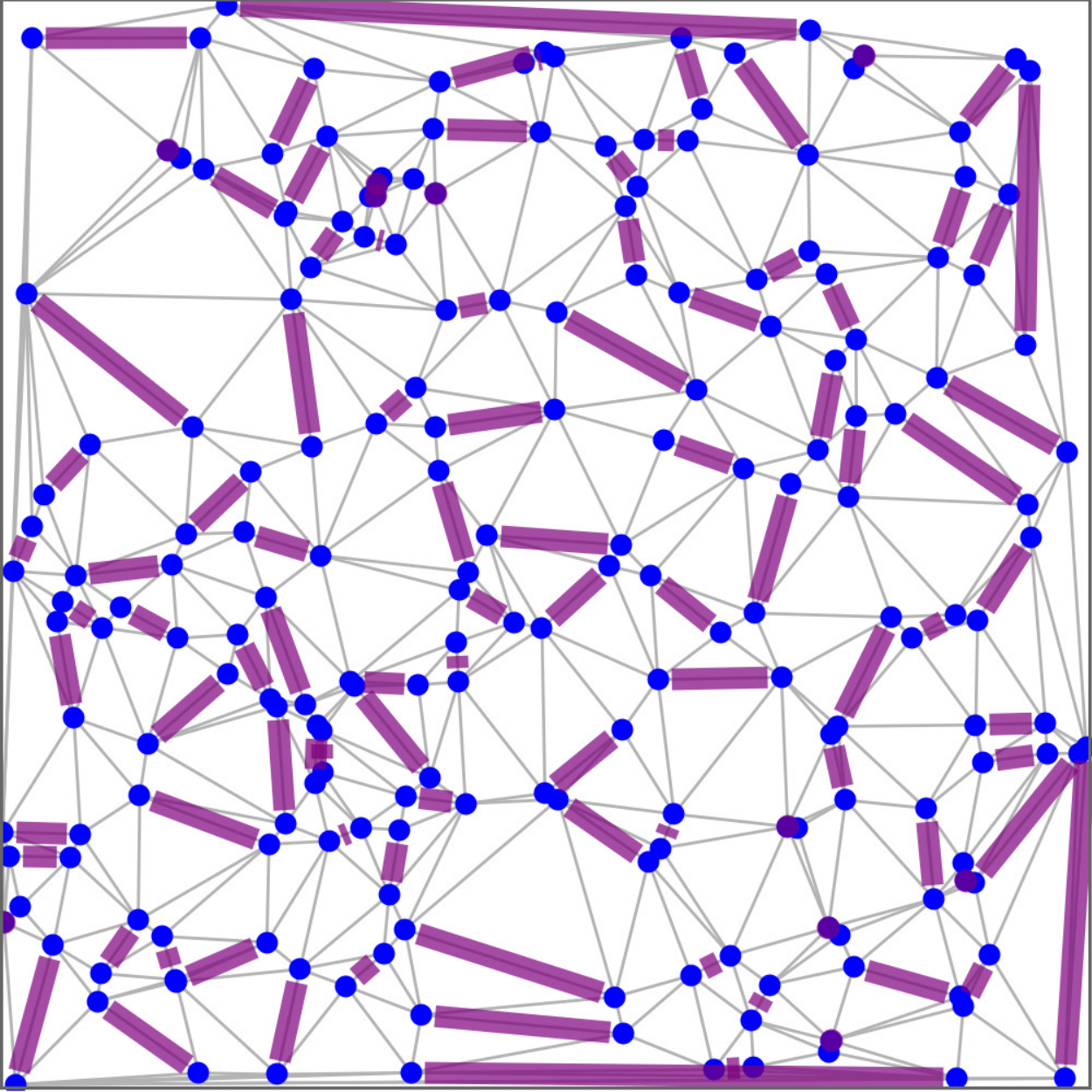} \hspace{1mm}
    \includegraphics[scale=0.32]{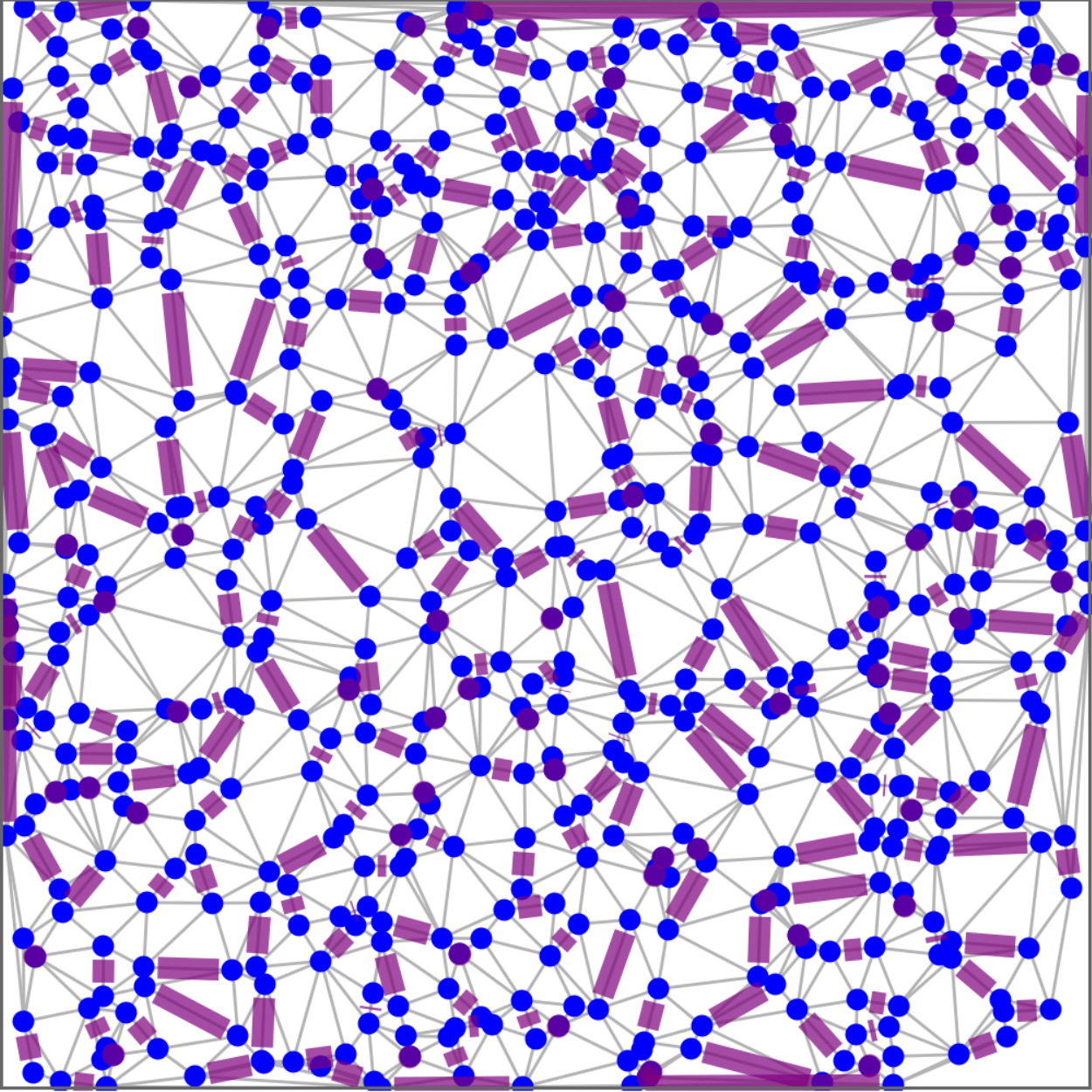} \\
  \end{centering}
  \caption{Similarly, perfect matchings on $200$ (left) and $600$ (right) points per unit area, but now selected from the unique Delaunay triangulation on the points (which is marked by the dark, framework edges). These admissible links represent disks, known as \textit{circumcircles}, which intersect pairs of points but capture no others within their interior, modelling interference guard regions.}\label{fig:main2}

\end{figure}

The majority of research on geometric tour problems, as well as the underlying network routing problems, focuses on cost functions based on the lengths of edges. However, many actual routing problems display a dominant contribution to the cost when e.g. switching paths or changing direction at a junction. A good example is Numerical Control Machining, where turns of a cutting stylus represent an important component of the objective function, as the cutter may have to be slowed in anticipation of a turn. Consider also robotic exploration, snow plowing, or street sweeping with turn penalties \cite{arkin2005}.

Interference between simultaneous transmissions is another example. To illustrate this, Fig. \ref{fig:main2} shows a short, perfect matching on $\mathcal{X}_{2N}$, but with the edges selected from the Delaunay triangulation. The matching is thus between sites of the Voronoi tessellation of the plane induced by the points. Each edge, indicated with a purple rectangle, as a consequence corresponds to a disk in the plane with a transmitter-receiver pair on its boundary, but which does not contain any other point of $\mathcal{X}_{2N}$.

This resembles models of computational geometry \cite{biniaz2016}. As a model then of interference in dense networks, we define  $G_{S}(\mathcal{X}_{2N})$ as the graph on $\mathcal{X}_{2N}$ which has an edge between two points $i,j \in \mathcal{X}_{2N}$ if and only if a scaled, translated version of a shape $S$ (known as a  \textit{homothet}) exists having both $i$ and $j$ on its boundary, and whose interior does not contain any point of $\mathcal{X}_{2N}$. If $S$ is a disk $\bigcirc$, as disucssed above, then $G_{\bigcirc}(\mathcal{X}_{2N})$ is the \textit{Delaunay triangulation} of $\mathcal{X}_{2N}$. If $S$ is an equilateral triangle $\triangledown$, then $G_{\triangledown}(\mathcal{X}_{2N})$ is the triangular distance Delaunay graph on $\mathcal{X}_{2N}$. If $S$ is a diametral disk, we instead have a $\beta$\textit{-skeleton} with $\beta=1$, i.e the \textit{Gabriel graph}, which contains the Euclidean minimal spanning tree. The beam pattern created by the transceiver pair is then symmetric.

A matching is called \textit{perfect} if all vertices are paired. A matching is called \textit{weak} if matched shapes $S$ intersect at least somewhere in the plane, and \textit{strong} if they don't. A typical matching on a Delaunay triangulation is weak. Given a set of points in the plane, the strong matching problem is to compute a strong matching on e.g. a Delaunay triangulation on those points, of maximum cardinality \cite{biniaz2017}. With respect to interference, we consider only perfect, weak matchings in this letter.

We now detail a theorem concerning how the capacity of a matching is affected by these constraints. We first define a \textit{minimum triangulation} on a set of Euclidean points as that triangulation of minimum total Euclidean length, and the \textit{Delaunay triangulation} as that triangulation of the points formed by adding edges between pairs of points which intersect a circle which itself contains no other points in its interior.

\begin{theorem}\label{t:interference}
Given the capacity of a wireless link over a distance $d_{i}$ is given by $C(d_{i}) = \log_{2}(1+d_{i}^{-\eta})$, and that there exists a perfect matching of the points of $\mathcal{X}_{2N}$, then augmentation of the matching such that it is weak with respect to the disk $\bigcirc$, that is, selected from amongst the edges of the Delaunay triangulation on $\mathcal{X}_{2N}$, almost certainly adjusts its capacity by only a constant factor.
\end{theorem}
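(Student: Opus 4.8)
The plan is to compare the total length (and hence the capacity) of an unconstrained shortest matching with the total length of the shortest matching drawn only from Delaunay edges, and to show the two differ by at most a constant multiplicative factor almost surely. Write $M^{*}$ for the globally shortest perfect matching on $\mathcal{X}_{2N}$ with length $L^{*}=\sum_{i}d_{i}$, and $M^{\bigcirc}$ for the shortest perfect matching whose edges all lie in the Delaunay triangulation $G_{\bigcirc}(\mathcal{X}_{2N})$, with edge lengths $d_{i}'$. Since $M^{\bigcirc}$ is constrained to a subgraph, $L^{\bigcirc}=\sum_{i}d_{i}' \ge L^{*}$ trivially; the substance of the theorem is a reverse inequality $L^{\bigcirc}\le \kappa\,L^{*}$ for some constant $\kappa$ independent of $N$, together with the transfer of this length comparison to the capacity functional of \eqref{e:capacity}.

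First I would establish the length comparison. The natural route is a \emph{spanner} argument: the Delaunay triangulation is a well-known geometric $t$-spanner, meaning that for any two points $u,v$ there is a path in $G_{\bigcirc}$ whose length is at most $t\cdot\|u-v\|$ with $t$ an absolute constant (the classical bound is $t=\tfrac{4\sqrt{3}}{9}\pi$, and sharper constants are known). Given the optimal free matching $M^{*}$, I would replace each matched edge $\{u,v\}$ by its Delaunay spanner path, obtaining a low-weight connected structure on the matched pairs of total length at most $t\,L^{*}$. From this subgraph I then extract a perfect matching using Delaunay edges only; the cleanest way is to invoke that a minimum-weight perfect matching is at most the weight of any spanning structure admitting one, or to argue combinatorially that short edges suffice to pair the points locally. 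This yields $L^{\bigcirc}\le c\,L^{*}$, and combined with the trivial bound gives $L^{*}\le L^{\bigcirc}\le c\,L^{*}$.

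Next I would transfer the length bound to the capacity. The difficulty is that $C(d)=\log_{2}(1+d^{-\eta})$ is a highly nonlinear, non-monotone-friendly transform of length, so a bound on $\sum d_{i}$ does not immediately bound $\sum C(d_{i})$. Here I would use the scaling of matching lengths in the dense limit: for a binomial process of $2N$ points in $[0,1]^{d}$ the typical edge length in a short matching is of order $N^{-1/d}$, so $d_{i}^{-\eta}\gg 1$ with high probability and $C(d_{i})=\log_{2}(1+d_{i}^{-\eta})\approx \eta\log_{2}(1/d_{i})$ up to lower-order terms. Thus $C_{\mathcal{M}}\approx \tfrac{\eta}{\ln 2}\sum_{i}\ln(1/d_{i})$, and a multiplicative change of all edge lengths by a bounded factor $\kappa$ shifts each summand by the additive constant $\tfrac{\eta}{\ln 2}\ln\kappa$, hence changes the aggregate capacity by $O(N)$ — the same order as $C_{\mathcal{M}}$ itself, which is also $\Theta(N)$. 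This is exactly a constant-factor adjustment of the capacity, as claimed.

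I expect the main obstacle to be the last step rather than the spanner step, because it requires controlling the \emph{distribution} of edge lengths, not just their sum: the logarithmic capacity transform is sensitive to anomalously short or long edges, so I would need an almost-sure concentration statement ruling out edges much shorter than $N^{-1/d}$ (which would make $\ln(1/d_{i})$ blow up) or a negligible number of long ``boundary'' edges. The phrase \emph{almost certainly} in the statement signals that this is where the probabilistic content lives; I would discharge it via known concentration and subadditivity results for Euclidean functionals (of the Beardwood--Halton--Hammersley / Yukich type) establishing that both $L^{*}$ and $C_{\mathcal{M}}$ concentrate at their $\Theta(N^{1-1/d})$ and $\Theta(N)$ scales respectively, so that the constant factor in length produces only a constant factor in the concentrated capacity with probability tending to one.
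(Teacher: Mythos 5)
First, be aware that the paper does not actually prove Theorem~\ref{t:interference}: the authors state explicitly that they ``only argue heuristically,'' resting on the Chang--Lee result that the total length of the Delaunay triangulation of random points is almost surely within a constant factor of the minimum triangulation, plus an undemonstrated claim that the shortest matching restricted to Delaunay edges approximates the unrestricted one. Your route --- Delaunay-as-spanner plus a length-to-capacity transfer --- is genuinely different and closer to a real proof strategy, but two of its steps do not go through as written. (i) The extraction step: replacing each matched pair $\{u,v\}$ of $M^{*}$ by a Delaunay spanner path gives a spanning subgraph $H\subseteq G_{\bigcirc}$ of weight at most $t\,L^{*}$, but $H$ need not contain a perfect matching (a star already defeats it), and the usual shortcutting trick is unavailable because shortcuts leave the Delaunay subgraph. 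The fact you want to ``invoke'' --- that the minimum-weight perfect matching of $G_{\bigcirc}$ is bounded by the weight of a spanning structure --- is only true if that structure itself contains a perfect matching. A cleaner repair drops the spanner entirely: the total length of the whole Delaunay triangulation on $2N$ uniform points is a.s.\ $\Theta(N^{1-1/d})$, Delaunay triangulations always admit perfect matchings (Dillencourt), any such matching has length at most the total, and $L^{*}$ is itself a.s.\ $\Theta(N^{1-1/d})$; this yields $L^{\bigcirc}\le c\,L^{*}$ directly and is essentially the paper's heuristic made precise.

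Second, and more seriously, the transfer from lengths to capacities is not closed. Your claim that a multiplicative factor $\kappa$ on each length becomes an additive $\tfrac{\eta}{\ln 2}\ln\kappa$ per edge presupposes an edge-by-edge coupling $d_{i}'\le\kappa\,d_{i}$, whereas the spanner (or any length) argument only delivers the aggregate bound $\sum_{i} d_{i}'\le\kappa\sum_{i} d_{i}$. For the concave functional $d\mapsto\log_{2}(1+d^{-\eta})$, which is unbounded near zero, an aggregate length bound controls nothing: a single edge of length $e^{-N}$ is compatible with the length bound yet contributes $\Theta(N)$ capacity on its own. What is actually needed is control of the empirical distribution of edge lengths in both optimal matchings (e.g.\ that all but $o(N)$ edges have length $\Theta(N^{-1/d})$ and that no edge is shorter than $N^{-c}$ for some constant $c$); this is not supplied by the Beardwood--Halton--Hammersley/Yukich machinery you cite, because $\log(1/d)$ is not homogeneous of positive order and those concentration theorems do not apply to the capacity functional directly. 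Finally, a small slip: with $d_{i}\asymp N^{-1/d}$ one has $C_{\mathcal{M}}=\Theta(N\log N)$, not $\Theta(N)$ --- this actually helps you, since an $O(N)$ additive perturbation is then a $1+o(1)$ factor, but it should be stated correctly.
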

  The minimum (length) triangulation on Euclidean points $\mathcal{X}_{2N}$ is potentially much shorter than its Delaunay counterpart. However, on random points, such as $\mathcal{X}_{2N}$, the ratio of the total Euclidean length of the Delaunay triangulation to the minimum triangulation is almost always $\mathcal{O}(1)$. See \cite{chang1984} for a proof of this. Their argument implies the Delaunay triangulation is a ``good'' approximation to minimum triangulation in this case. The length of the shortest matching on uniformly random points is similarly well approximated by the shortest matching on the Delaunay triangulation, though we do not demonstrate this rigourously. Thus, moving from an interference-free to interference-limited matching does not affect the order of the capacity. One would have to complete, however, a rigorous proof of the $\mathcal{O}(1)$ length difference between the minimum length matching on the minimum triangulation, and the minimum matching on the points themselves, in order for a proof to be rigourous. We thus only argue heuristically for Theorem \ref{t:interference}. 

\begin{theorem}\label{t:gupta}
Given the capacity of a wireless link over a distance $d_{i}$ is given by $C(d_{i}) = \log_{2}(1+d_{i}^{-\eta})$, and that there exists a perfect matching of the points of $\mathcal{X}_{2N}$ where the Euclidean lengths of the edges in the matching are each of order $N^{-1/d}$, then the one-hop throughput capacity of the interference-limited network, based on a strong matching on a disk $\bigcirc$, is $\mathcal{O}\left(\log N \right)$.
\end{theorem}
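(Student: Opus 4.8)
The plan is to combine the prescribed link-distance scaling with Shannon's formula, exploiting that a strong matching operates free of interference. A perfect matching of the $2N$ points is a set of exactly $N$ disjoint edges, so there are $N$ simultaneously active transmitter--receiver pairs. The first step is to note that, since the matching is \emph{strong} with respect to the disk $\bigcirc$, the homothetic guard regions attached to distinct edges are pairwise disjoint; under the guard-region (protocol) model of interference this permits all $N$ links to be activated concurrently without any link corrupting another, so the signal-to-interference-plus-noise ratio of each link collapses to its signal-to-noise ratio $d_i^{-\eta}$. The interference constraint then contributes no change of order, in keeping with the constant-factor conclusion of Theorem~\ref{t:interference}.

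The second step is an elementary asymptotic evaluation of the per-pair Shannon rate. By hypothesis each edge length satisfies $d_i = \Theta(N^{-1/d})$, so that $d_i^{-\eta} = \Theta(N^{\eta/d}) \to \infty$. Substituting into the capacity law and using $\log_2(1+x) = \log_2 x + o(1)$ as $x \to \infty$ yields
\begin{equation}
  C(d_i) = \log_2\!\left(1 + d_i^{-\eta}\right) = \frac{\eta}{d}\,\log_2 N + \mathcal{O}(1) = \Theta(\log N),
\end{equation}
uniformly over the $N$ edges. Because every source--destination pair communicates over a single hop at this rate, the one-hop throughput afforded to a typical pair is $\Theta(\log N)$, hence $\mathcal{O}(\log N)$, as claimed.

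The step I expect to carry the real content is the reduction of the interference-limited SINR to the interference-free SNR. Were one to adopt the physical SINR model in place of the protocol model, the aggregate interference seen by a receiver from the remaining $N-1$ transmitters, packed at density $\Theta(N)$ into the unit cube with near-field cut-off $r_0 = \Theta(N^{-1/d})$, is of order $N\int_{r_0}^{1} r^{d-1-\eta}\,\mathrm{d}r$, which for $\eta > d$ equals $\Theta(N^{\eta/d})$ --- precisely the order of the desired signal. Indiscriminate simultaneous activation would then pin the SINR at $\Theta(1)$ and annihilate the logarithmic growth. The strong-matching hypothesis is exactly what preserves the order: its disjoint guard regions supply a spatial-reuse schedule in which only $\mathcal{O}(1)$ dominant interferers crowd any fixed neighbourhood, so the SINR, and with it the rate, retains its $\Theta(\log N)$ order up to a constant time-sharing factor. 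Quantifying this schedule --- bounding how many guard regions can accumulate near a receiver and hence the residual interference --- is the one delicate ingredient; the remainder is the logarithmic expansion displayed above.
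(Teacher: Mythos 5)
Your proposal is correct, and the central computation---$d_i = \Theta(N^{-1/d})$ forces $\log_2(1+d_i^{-\eta}) = (\eta/d)\log_2 N + \mathcal{O}(1)$, and simultaneous activation of the $N$ disjoint pairs gives a per-pair one-hop throughput of order $\log N$---is exactly the paper's. Where you diverge is in how the interference constraint is discharged. The paper's proof first establishes the $\mathcal{O}(\log N)$ rate for the interference-free network and then invokes Theorem~\ref{t:interference} to assert that passing to a matching drawn from the Delaunay triangulation (a \emph{weak} matching, in the paper's own wording, despite the theorem statement saying ``strong'') adjusts the capacity by only a constant factor; note that Theorem~\ref{t:interference} is itself only argued heuristically in the paper. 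You instead take the strong-matching hypothesis at face value: pairwise-disjoint guard regions give a protocol-model schedule in which all $N$ links operate at their interference-free SNR, so no appeal to Theorem~\ref{t:interference} is needed. Your route is more self-contained and more faithful to the theorem as stated. Your closing discussion of the physical SINR model---showing that without the guard regions the aggregate interference would be of the same order $N^{\eta/d}$ as the signal and would destroy the logarithmic growth---does not appear in the paper, and it correctly isolates the one step (bounding how many guard regions can accumulate near a receiver) that neither your argument nor the paper's closes rigorously.
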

\begin{proof}
  Considering a matching where devices are paired with their near-neighbours. We have that the length $L_{\mathcal{M}}$ is of order $N^{1-1/d}$, since there are $N$ edges each with length of order $N^{-1/d}$ \cite[Section 5.1]{aldous2001}. Wireless links have a corresponding limiting capacity of order
\begin{eqnarray}
C_{i} = \mathcal{O}\left(\log_{2}\left(1+\left(\frac{1}{N^{1/d}}\right)^{-\eta}\right)\right).
\end{eqnarray}
Since $N$ links can transmit simultaneously, the one-hop throughput capacity of the interference-free network is
\begin{eqnarray}
\mathcal{O}\left(\frac{1}{N}\sum_{i=1}^{N}\log_{2}\left(1+N^{\eta/d}\right)\right) = \mathcal{O}\left(\log N\right).
\end{eqnarray}
By Theorem \ref{t:interference}, the interference-limited capacity, based on finding a weak matching of $\mathcal{X}_{2N}$ on the disk $\bigcirc$, is only a constant factor larger than this, and so its order remains, at least in our model, $\mathcal{O}\left(\log N\right)$.
\end{proof}

Under a more realistic model of interference, it may be possible to increase the capacity by choosing a longer matching. We defer this question to a later study of interference in the monopartite case.

\section{Multiple-Hop Capacity}\label{sec:multihop}
A simple example of a Markov chain on the space of matchings with local update rules was originally proposed by Diaconis et al. \cite{diaconis2001}, known as the \textit{switch chain}. The endpoints of a pair of disjoint edges are swapped when moving between adjacent states \cite{dyer2017}. This could be used to model multi-hop delays, which reduce capacity. This would require a central computer to list all perfect matchings of the communication graph, which is P-complete \cite{valiant1979}, even in the simpler bipartite case i.e. this task is intractable without a distributed algorithm \cite{bertsekas1988}.

\section{Relation to Previous Capacity Limits}\label{sec:otherresults}
We briefly highlight the relation to the $\sqrt{n}$ capacity growth discussed in detail in both AlAmmouri et al. \cite{alammouri20172}, and earlier by both Gupta and Kumar \cite{gupta2000} and Franceschetti et. al \cite{franceschetti2009}. We state that, assuming our shape-based inteference model holds for all $n$, then the network capacity of dense stations using directional transmission to near-neighbour receivers scales as $\mathcal{O}(n \log n)$. In the mesh network case with $\mathcal{O}(\sqrt{n})$ hops between source and destination, the capacity limit is scaled to $\mathcal{O}(\sqrt{n} \log n)$. We differ from the fundamental $\sqrt{n}$ limits firstly due to the use of directional transmission, which isolates a communication pair from the rest of the network in a way which appears to allow capacity to scale in this way, and also our relatively elementary interference model, which may become unrealistic for sufficiently large $n$ (though potentially beyond that which is expected of ultra-dense deployment). The capacity per node in the multi-hop case with uniformly random destinations vanishes in the dense limit, which agrees with previous results. We defer further treatment of this point to a more detailed study.

\section{Asymptotics and the Replica Method}\label{sec:asymptotics}
We need to rescale the edge lengths of the matching in order to study the asymptotic case where $N \to \infty$. We detail this procedure here, and discuss the replica method which is used in similar problems in statistical physics.
\begin{proposition}\label{p:mezard}
  Label the points $x_{t} \in \mathcal{X}_{2N}$, $t \in \left[1,2N\right]$. Take $\norm{x-y}$ to be the Euclidean distance between $x,y \in \mathcal{X}_{2N}$. Also, construct the $2N \times 2N$ symmetric matrix $\left(c\left(i,j\right)\right)$, where $c(i,j)=\norm{x_{i}-x_{j}}$. Then, with $\pi$ a permutation of the points of $\mathcal{X}_{2N}$, there exists a $C_{\eta,d} > 0$ such that
  \begin{eqnarray}\label{e:capacity}
\lim_{N \to \infty}\frac{1}{N}\sum_{i=1}^{N}\log_{2}\left(1+\left(N^{1/d}d_{i}\right)^{-\eta}\right) \to C_{\eta,d}.
  \end{eqnarray}
\end{proposition}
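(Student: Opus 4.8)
The plan is to read the left-hand side as the integral of a fixed test function against the empirical measure of the rescaled edge lengths. Set $\ell_i := N^{1/d} d_i$, let $f(\ell) := \log_2(1+\ell^{-\eta})$, and introduce the empirical measure $\mu_N := \frac{1}{N}\sum_{i=1}^N \delta_{\ell_i}$ on $[0,\infty)$, so that the quantity in question is exactly $\int_0^\infty f\, d\mu_N$. The statement then decomposes into two tasks: (i) almost-sure weak convergence $\mu_N \Rightarrow \mu$ to a deterministic limit $\mu$, and (ii) passing the unbounded functional $f$ through that limit. For (i) I would invoke the subadditive, Beardwood--Halton--Hammersley-type machinery for Euclidean combinatorial optimisation together with the objective method surveyed in \cite{aldous2001} and specialised to matchings in \cite{sicuro2017}; these give that the rescaled edge-length spectrum of the shortest matching self-averages, so $\mu_N$ converges weakly, almost surely, to a non-random $\mu$. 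I would then define $C_{\eta,d} := \int_0^\infty f\, d\mu$, leaving the finiteness, positivity, and limit-exchange to be justified.

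Finiteness and positivity of $C_{\eta,d}$ follow from the two tails of $\mu$. For small argument $f(\ell) \le 1 + \eta\log_2(1/\ell)$, a mild logarithmic singularity, while for large argument $f(\ell) \le \ell^{-\eta}/\ln 2 \to 0$. The Euclidean limit measure carries a density behaving like $\ell^{d-1}$ as $\ell \to 0$ (the volume factor for two endpoints to lie within rescaled distance $\ell$) and decaying rapidly at infinity (the shortest matching penalises long edges). Since $\int_0^1 \ell^{d-1}\log(1/\ell)\, d\ell < \infty$ for every $d \ge 1$, the singular part is integrable and the rapid tail controls the large-$\ell$ part, so $C_{\eta,d} < \infty$; positivity is immediate because $f > 0$ and $\mu$ is not concentrated at infinity.

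The crux, and the step I expect to be the main obstacle, is upgrading weak convergence to convergence of $\int f\, d\mu_N$, since weak convergence alone does not control the contribution of anomalously short edges to an unbounded $f$. I would prove uniform integrability of $f$ under $\{\mu_N\}$ by a two-sided truncation. For short edges the key lemma is a uniform-in-$N$ counting bound $\frac{1}{N}\,\mathbb{E}\,\#\{i : \ell_i \le r\} \le 2 v_d r^d (1+o(1))$, with $v_d$ the unit-ball volume: every short matched edge is in particular a pair of points within rescaled distance $r$, and the expected number of such pairs in a $2N$-point binomial process scales as $2N^2 v_d (rN^{-1/d})^d = 2 v_d r^d N$. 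A layer-cake integration of the decreasing function $f$ against this bound gives $\frac{1}{N}\,\mathbb{E}\sum_{\ell_i \le \epsilon} f(\ell_i) \le C_d \int_0^\epsilon \ell^{d-1}\bigl(1+\log(1/\ell)\bigr)\, d\ell \to 0$ as $\epsilon \to 0$, uniformly in $N$. For long edges I would use $\sum_i \ell_i = N^{1/d} L_{\mathcal{M}} = \mathcal{O}(N)$ (the length scaling invoked in Theorem~\ref{t:gupta}) and Markov's inequality to get $\frac{1}{N}\,\#\{\ell_i > L\} = \mathcal{O}(1/L)$, while $f$ is bounded on $[L,\infty)$; the product vanishes as $L \to \infty$. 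Together these yield uniform integrability, hence $\int f\, d\mu_N \to \int f\, d\mu = C_{\eta,d}$ in expectation, and a concentration estimate for Euclidean functionals (an Azuma or Efron--Stein bound under single-point insertion, in the framework of \cite{aldous2001}) promotes this to the almost-sure limit asserted. The genuinely delicate points are making the short-edge count rigorous for the \emph{minimum} matching rather than a generic one, and the low-dimensional regularity of $\mu$ near the origin: in $d=2$ matching costs are known to acquire logarithmic corrections, so both the clean $\ell^{d-1}$ behaviour and even the plain $N^{-1/d}$ rescaling must be handled with care there. For the explicit value of $C_{\eta,d}$ one would instead turn to the replica/cavity computation of Mézard and Parisi for the link-length distribution (cf.\ \cite{sicuro2017}), which is non-rigorous but delivers the constant directly.
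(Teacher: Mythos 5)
Your route is genuinely different from the paper's, and in fact considerably more ambitious. The paper's own ``proof'' does not attempt convergence for the actual Euclidean minimal matching at all: it sets up the matching--permutation correspondence, then explicitly \emph{replaces} the correlated Euclidean edge weights by i.i.d.\ random variables with small-$l$ density $\sim l^{d-1}$ (the ``independent link model''), cites the M\'{e}zard--Parisi replica analysis \cite{mezard1985,mezard1988} for that surrogate, and asserts that the rescaled capacity ``should converge'' to an unknown constant, deferring the computation. You instead stay in the Euclidean model, recast the quantity as $\int f\,d\mu_N$ against the empirical measure of rescaled edge lengths, and supply the two ingredients the paper never addresses: a uniform-in-$N$ short-edge counting bound (every matched edge shorter than $rN^{-1/d}$ is a close pair, of which there are $\mathcal{O}(r^d N)$ in expectation) to tame the logarithmic singularity of $f$ at $0$, and a Markov bound from $\sum_i \ell_i = \mathcal{O}(N)$ for the long edges. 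That uniform-integrability argument is correct and is real added value; it is exactly the step needed to pass an unbounded test function through a weak limit, and nothing like it appears in the paper.

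The genuine gap is your step (i). You ``invoke'' almost-sure weak convergence of $\mu_N$ to a deterministic limit $\mu$ from \cite{aldous2001} and \cite{sicuro2017}, but neither source delivers it for the monopartite Euclidean minimal matching: Aldous's $\zeta(2)$ paper concerns the mean-field random assignment problem (i.i.d.\ costs, no geometry), and the Beardwood--Halton--Hammersley subadditivity framework gives convergence of the \emph{total} rescaled length, not of the full empirical distribution of edge lengths of the optimiser. Establishing that $\mu_N$ self-averages to a non-random $\mu$ (e.g.\ via the objective method and local weak convergence of the minimal matching on a Poisson limit) is precisely the hard, unresolved core of the proposition --- it is the same obstruction that forces the paper to retreat to the independent-link model. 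So your proof is conditional on an unproven lemma that is at least as deep as the statement itself; everything downstream of it (finiteness, positivity, uniform integrability, concentration) is sound, but the foundation is asserted rather than proved. A secondary, minor point: the logarithmic corrections you flag in $d=2$ afflict the \emph{bipartite} matching problem; the monopartite matching obeys the clean $N^{1-1/d}$ scaling there, so that caveat is less threatening than you suggest.
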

\begin{proof}
  In the monopartite case where all vertices $x_{t} \in \mathcal{X}_{2N}$, $t \in \left[1,2N\right]$ are of the same type, the set of perfect matchings of $\mathcal{X}_{2N}$ is in one to one correspondence with the set of permutations $\Pi$ of $[1,2N]$. If a permutation $\pi \in \Pi$ has elements $\pi(1) \dots \pi(2N)$, the matching has edge set $E$ given by $N$ sequential non-overlapping pairs $\left(\pi(1),\pi(2)\right), \dots, \left(\pi(2N-1),\pi(2N)\right)$, and the corresponding communication graph is $G_{\mathcal{M}}=(\mathcal{X}_{2N},E)$.
  A perfect matching in the bipartite case is similar, taking the form of a one-to-one correspondence between two non-overlapping point sets.

  With any permutation of the points, we have $2N(2N-1)$ inter-point distances given by the coefficients $c(i,j)$.
To simplify things, we remove the spatial correlation of edge weights by taking them to be i.i.d. random reals. In this setting, the $c(i,j)$ probability density
$\rho_{c(i,j)}(l) \sim l^{d-1} \text{ as } l \downarrow 0$. To see this, simply imagine a disk of radius $l$ around a point of $\mathcal{X}_{2N}$, and compare its circumference.

In a manner corresponding to the argument of M\'{e}zard and Parisi in \cite[Eqs. 22 and 23]{mezard1985}, we can see from Eq. \ref{e:capacity} that the expected capacity of the shortest matching denoted $\mathbb{E}C_{\mathcal{M}}$ grows as $\mathcal{O}\left(N \log N\right)$. If we rescale the problem by this expectation, and instead study $C^{\prime}_{\mathcal{M}}=(N\log N)^{-1}C_{\mathcal{M}}$, the now rescaled capacity should converge to an unkown constant, since the deviation of the finite capacity from its expectation is bounded \cite{aldous2001}. In other words, we now look at
  \begin{eqnarray}\label{e:capacity2}
C^{\prime}_{\mathcal{M}}=\lim_{N \to \infty}\frac{1}{N}\sum_{i=1}^{N}\log_{2}\left(1+d_{i}^{\prime -\eta}\right)
  \end{eqnarray}
where $d_{i}^{\prime}=N^{1/d}d_{i}$. This \textit{independent link model} is motivated by long standing  attempts in statistical physics, particularly the theory of the magnetic alloys known as \textit{spin glasses}, to try and fit so called \textit{mean field models} to the Euclidean matching problem \cite{aldous2001,mezard1988}, since the total length on correlated edges are mathematically difficult to analyse, and these more tractable models can be rather involved in their own right. In the interference-limited case, we may be able to study phase transitions in the capacity as we continuously scale certain parameters, such as shape parameters of guard regions $S$ around transceiver pairs.
\end{proof}

Similarly, the limit of the rescaled length of a perfect matching on the Delaunay triangulation also converges, but to a different limit. We defer these details to a later study of interference in monopartite Euclidean matchings. Also, going on to evaluate these constants in e.g. Eq. \ref{e:capacity2} appears to be an intractable non-linear optimisation problem \cite{berstein2008}.

As a more tractable model, consider a \textit{stretched exponential} path loss function, rather than the power law of Eq. \ref{e:capacity} \cite{alammouri2017}. With fitting parameters $\alpha,\beta>0$, signal powers attenuates instead over a distance $r$ as $\exp\left(-\alpha r^{\beta}\right)$. Consequently, as $ d_{i} \downarrow 0$, we can take advantage of a Shannon capacity which follows a power law in the link distance, since with $A(\alpha)$ some function which does not depend on $d_{i}$, then
\begin{eqnarray}\label{e:exploss}
 \log_{2}\left(1 + e^{- \alpha d_{i}^{\beta}}\right) \to 1 - A\left(\alpha\right) d_{i}^{\beta}
\end{eqnarray}
via a Maclaurin expansion.

Also, due to Mezard and Parisi we know that, asymptotically, the shortest length $L_{d}$ in the independent link model of the rescaled monopartite Euclidean matching problem is given by
\begin{equation}\label{e:mezard}
L_{d} = 2d\int_{-\infty}^\infty G(l) e^{-G(l)} \mathrm{d}l
\end{equation}
where $G$ is obtained numerically from the following equation
\begin{equation}
G(l) = \frac{2}{(d-1)!} \int_{-l}^\infty (l+y)^{d-1} e^{-G(y)} \mathrm{d}y
\end{equation}
with $-\infty < l < \infty$, see \cite[Section 5]{aldous2001}. Due to the simplification of Eq. \ref{e:exploss}, we need only replace Eq. \ref{e:mezard} with Eq. \ref{e:capacity}, and use the \textit{replica method} to find an expression for the limiting capacity, following the detailed calculations of those authors, see e.g. \cite{mezard1985}. This sophisticated technique will be deferred to a later work, as well as the similar situation in the monopartite case.

\section{Conclusions}\label{sec:conclusion}
We have shown how a model of Euclidean matchings can be used to derive scaling limits for the data capacity of interference-limited ultra-dense networks. We discussed how, if a matching of minimum Euclidean length is arranged, such a device pairing will display a maximal data capacity in the interference-free case. When edges are assigned capacities based on their Euclidean length, but must instead be selected from the Delaunay triangulation on the points, the network is able to maximise its capacity as well as minimise interference, given the model's assumptions.

We proved that, with $\eta$ the path loss exponent, given the capacity of a wireless link over a distance $d_{i}$ is given by $C(d_{i}) = \log_{2}(1+d_{i}^{-\eta})$, and that there exists a perfect matching of the points of $\mathcal{X}_{2N}$ where the Euclidean lengths of the edges are each of order $N^{-1/d}$, then the one-hop capacity of the monopartite network is $\mathcal{O}(N\log{N})$. Finally, we discussed a network limit, whose capacity is related to a similar problems in the statistical physics of disordered systems. We believe ultra-dense networks can in the future be further successfully entwined with these important models and ideas.

\section*{Acknowledgements}
This work is supported by the Samsung Research Funding and Incubation Center of Samsung Electronics under Project Number SRFC-IT-1601-09. The first author also acknowledges support from the EPSRC Institutional Sponsorship Grant \textit{Random Walks on Random Geometric Networks}.

%\bibliographystyle{IEEEtran}
%\bibliography{references}

\begin{thebibliography}{10}
\providecommand{\url}[1]{#1}
\csname url@samestyle\endcsname
\providecommand{\newblock}{\relax}
\providecommand{\bibinfo}[2]{#2}
\providecommand{\BIBentrySTDinterwordspacing}{\spaceskip=0pt\relax}
\providecommand{\BIBentryALTinterwordstretchfactor}{4}
\providecommand{\BIBentryALTinterwordspacing}{\spaceskip=\fontdimen2\font plus
\BIBentryALTinterwordstretchfactor\fontdimen3\font minus
  \fontdimen4\font\relax}
\providecommand{\BIBforeignlanguage}[2]{{%
\expandafter\ifx\csname l@#1\endcsname\relax
\typeout{** WARNING: IEEEtran.bst: No hyphenation pattern has been}%
\typeout{** loaded for the language `#1'. Using the pattern for}%
\typeout{** the default language instead.}%
\else
\language=\csname l@#1\endcsname
\fi
#2}}
\providecommand{\BIBdecl}{\relax}
\BIBdecl

\bibitem{alammouri2017}
A.~AlAmmouri, J.~G. Andrews, and F.~Baccelli, ``Sinr and throughput of dense
  cellular networks with stretched exponential path loss,'' \emph{IEEE
  Transactions on Wireless Communications}, vol.~17, no.~2, March 2017.

\bibitem{alammouri20172}
------, ``{A Unified Asymptotic Analysis of Area Spectral Efficiency in
  Ultradense Cellular Networks},'' \emph{IEEE Transaction on Information
  Theory}, vol.~65, no.~2, 2017.

\bibitem{ge2016}
X.~Ge, S.~Tu, G.~Mao, C.~X. Wang, and T.~Han, ``5g ultra-dense cellular
  networks,'' \emph{IEEE Wireless Communications}, vol.~23, no.~1, pp. 72--79,
  March 2016.

\bibitem{sicuro2017}
G.~Sicuro, \emph{The Euclidean Matching Problem}.\hskip 1em plus 0.5em minus
  0.4em\relax Springer Theses, Springer International Publishing AG, 2017.

\bibitem{lovaszbook}
L.~Lov\'{a}sz and M.~Plummer, \emph{Matching Theory}.\hskip 1em plus 0.5em
  minus 0.4em\relax AMS Chelsea Publishing, 1986.

\bibitem{han2017}
Z.~Han, Y.~Gu, and W.~Saad, \emph{Matching Theory for Wireless Networks}.\hskip
  1em plus 0.5em minus 0.4em\relax Springer, 2017.

\bibitem{han20172}
------, \emph{Stable Marriage Model with Cheating for D2D
  Communications}.\hskip 1em plus 0.5em minus 0.4em\relax Springer
  International Publishing, 2017.

\bibitem{han20173}
------, \emph{Stable Fixture Model Implementation in LTE V2X
  Communications}.\hskip 1em plus 0.5em minus 0.4em\relax Springer
  International Publishing, 2017, pp. 39--50.

\bibitem{gupta2015}
A.~K. Gupta, X.~Zhang, and J.~G. Andrews, ``Sinr and throughput scaling in
  ultradense urban cellular networks,'' \emph{IEEE Wireless Communications
  Letters}, vol.~4, no.~6, pp. 605--608, Dec 2015.

\bibitem{taranto2014}
R.~D. Taranto, S.~Muppirisetty, R.~Raulefs, D.~Slock, T.~Svensson, and
  H.~Wymeersch, ``Location-aware communications for 5g networks: How location
  information can improve scalability, latency, and robustness of 5g,''
  \emph{IEEE Signal Processing Magazine}, vol.~31, no.~6, pp. 102--112, October
  2014.

\bibitem{li20112}
P.~Li, C.~Zhang, and Y.~Fang, ``The capacity of wireless ad hoc networks using
  directional antennas,'' \emph{IEEE Transactions on Mobile Computing},
  vol.~10, no.~10, pp. 1374--1387, December 2011.

\bibitem{knight2017}
G.~{Knight}, A.~P. {Kartun-Giles}, O.~{Georgiou}, and C.~P. {Dettmann},
  ``Counting geodesic paths in 1-d vanets,'' \emph{IEEE Wireless Communications
  Letters}, vol.~6, no.~1, pp. 110--113, Feb 2017.

\bibitem{giles2016}
A.~P. Giles, O.~Georgiou, and C.~P. Dettmann, ``Connectivity of soft random
  geometric graphs over annuli,'' \emph{J. Stat. Phys.}, vol. 162, no.~4, p.
  1068–1083, January 2016.

\bibitem{giles2015}
A.~P. {Giles}, O.~{Georgiou}, and C.~P. {Dettmann}, ``Betweenness centrality in
  dense random geometric networks,'' in \emph{2015 IEEE International
  Conference on Communications (ICC)}, June 2015, pp. 6450--6455.

\bibitem{koufos2016}
K.~{Koufos} and C.~P. {Dettmann}, ``Temporal correlation of interference in
  bounded mobile ad hoc networks with blockage,'' \emph{IEEE Communications
  Letters}, vol.~20, no.~12, pp. 2494--2497, Dec 2016.

\bibitem{koufos2018}
------, ``Temporal correlation of interference and outage in mobile networks
  over one-dimensional finite regions,'' \emph{IEEE Transactions on Mobile
  Computing}, vol.~17, no.~2, pp. 475--487, Feb 2018.

\bibitem{arkin2005}
E.~M. Arkin, M.~A. Bender, E.~D. Demaine, S.~P. Fekete, J.~S.~B. Mitchell, and
  S.~Sethia, ``Optimal covering tours with turn costs,'' \emph{SIAM Journal on
  Computing}, vol.~35, no.~3, pp. 531--566, 2005.

\bibitem{biniaz2016}
A.~Biniaz, \emph{Matchings in Geometric Graphs}.\hskip 1em plus 0.5em minus
  0.4em\relax PhD Thesis, available on arXiv:1610.06457, October 2016.

\bibitem{biniaz2017}
``Strong matching of points with geometric shapes,'' \emph{Computational
  Geometry, to appear, available on arXiv:1503.04871“, year =}.

\bibitem{chang1984}
R.~C. Change and R.~C.~T. Lee, ``On the average length of delaunay
  triangulations,'' \emph{BIT Numerical Mathematics}, September”, volume=
  1984.

\bibitem{aldous2001}
D.~J. Aldous, ``{The $\zeta(2)$ limit in the random assignment problem},''
  \emph{Random Structures and Algorithms}, vol.~18, no.~4, pp. 381--418, June
  2001.

\bibitem{diaconis2001}
P.~Diaconis, R.~Graham, and S.~P. Holmes, \emph{Statistical problems involving
  permutations with restricted positions}, ser. Lecture Notes - Monograph
  Series.\hskip 1em plus 0.5em minus 0.4em\relax Institute of Mathematical
  Statistics, 2001, vol.~6, pp. 195--222.

\bibitem{dyer2017}
M.~Dyer, M.~Jerrum, and H.~M\"{u}ller, ``On the switch markov chain for perfect
  matchings,'' \emph{J. ACM}, vol.~64, no.~2, pp. 12:1--12:33, Apr. 2017.

\bibitem{valiant1979}
L.~G. Valiant, ``The complexity of enumeration and reliability problems,''
  \emph{SIAM Journal on Computing}, vol.~8, no.~3, pp. 410--421, August 1979.

\bibitem{bertsekas1988}
D.~P. Bertsekas, ``The auction algorithm: A distributed relaxation method for
  the assignment problem,'' \emph{Annals of Operations Research}, vol.~14,
  no.~1, pp. 105--123, Dec 1988.

\bibitem{gupta2000}
P.~Gupta and P.~R. Kumar, ``The capacity of wireless networks,'' \emph{IEEE
  Transactions on Infomation Theory}, vol.~42, no.~6, pp. 388--404, March 2000.

\bibitem{franceschetti2009}
M.~Franceschetti, M.~D. Migliore, and P.~Minero, ``The capacity of wireless
  networks: Information-theoretic and physical limits,'' \emph{IEEE Transaction
  on Information Theory}, vol.~55, no.~8, August 2009.

\bibitem{mezard1985}
M.~M\'{e}zard and G.~Parisi, ``Replicas and optimization,'' \emph{Journal de
  Physique Letttres}, vol.~46, no.~17, pp. 771--778, September 1985.

\bibitem{mezard1988}
------, ``The euclidean matching problem,'' \emph{J. Phys. France}, vol.~49,
  no.~12, p. 2019–2025, December 1988.

\bibitem{berstein2008}
Y.~Berstein and S.~Onn, ``Nonlinear bipartite matching,'' \emph{Discrete
  Optimization}, vol.~5, no.~1, pp. 53 -- 65, February 2008.

\end{thebibliography}

% Generated by IEEEtran.bst, version: 1.14 (2015/08/26)

\end{document}